\newtheorem{lemma}{Lemma}
\newtheorem{remark}{Remark}
\newtheorem{theorem}{Theorem}
\DeclareMathOperator{\spn}{span}
\begin{document}

\title{Generalized Fusion of Qudit Graph States}

\author{Noam Rimock and 
Yaron Oz}

\affiliation{School of Physics and Astronomy, Tel Aviv University, Ramat Aviv 69978, Israel}


\begin{abstract}

We formalize a generalized type-II fusion operation for qudit cluster states within linear optics. Two designated qudits, one from each input cluster, interfere with optional ancilla qudits via a passive linear-optical network, followed by number-resolving detection; conditioned on 
measurement outcome,
the remaining qudits form the post-selected fused state. We prove a general rank bound: for any such interferometer and outcome, the reduced density matrix across the two parent clusters has Schmidt rank at most $M$, the total number of measured qudits including ancillae. Consequently, a correct qudit fusion which requires rank $d$ is impossible without ancillae and requires at least $d-2$ ancilla qudits. Our analysis extends previous no-go results for Bell-type qubit fusion to the qudit setting and to generalized, non-Bell projections. We analyze the probabilities and entanglement of the relevant measurement outcomes, and discuss how our lower bound aligns with existing constructive schemes. These results set a clear resource threshold for high-dimensional, fusion-based photonic MBQC.

\if{

In qudit MBQC,  the resource state is a qudit graph state built with qudits as nodes,
controlled-phase gates as edges, and single‐site measurements are in the $d$-dimensional Pauli  algebra.
It has been proven that type II fusion applied to qudits graph states via Bell-type measurement of qudits does not form the required fused graph, and there is an upper bound on the rank of reduced density matrix by the number of measured qudits including the ancillae. 
We define generalized type II fusion of qudit graphs states, 
and prove that it requires at least $d-2$ ancillae.

}\fi
\end{abstract}


\maketitle

\section{Introduction}

Measurement based quantum computing (MBQC) proceeds by preparing a highly entangled resource state and then driving the computation with single‑site measurements. In photonics, large‑scale cluster states are assembled by fusion operations that interfere qubits or qubits from smaller resource states and post‑select on specific detector outcomes. For qubits, linear‑optical type‑I/II fusion  \cite{BrowneRudolph,FusionBasedQC,ThreePhoton} has become a standard primitive.
Generalized fusion has been introduced in \cite{bartolucci2021creationentangledphotonicstates,rimock2024generalized,schmidt2024generalizedfusionsphotonicquantum}, where the projection is not onto a Bell state, but rather on a general maximally-entangled state, resulting in the required fused state up to local qubit rotations.

Extending these fusion tools to qudits, i.e. higher‑dimensional photonic modes, remains a largely open problem. Qudits promise denser information encoding, improved noise tolerance, and hardware‑efficient entanglement distribution, yet fusing qudit clusters with only passive optics and photon counting requires further study.

This work formalizes a generalized, linear‑optical type‑II fusion for qudit cluster states and establishes a fundamental resource bound. In our setting, one qudit from each input cluster, the fused legs, is interfered on a passive, number‑preserving optical network together with optional ancilla qudits; number‑resolving detection on these measured modes heralds a measurement outcome, leaving a post‑selected state on the remaining, unmeasured cluster qudits. Our aim is to investigate when can such an operation realize the ideal qudit cluster fusion, up to local unitaries. We will 
obtain a simple rank constraint that translates directly into an ancilla requirement.

\if{
Measurement‐based quantum computation (MBQC) is a paradigm that uses a highly entangled resource state in the form of a graph state, and the computation proceeds by performing single‐qubit measurements of the
resource state with a classical feed‐forward. The graph state is multi‐qubit entangled state is
defined directly by an undirected graph $G=(V,E)$, where the vertices are singl-qubit states
and the edges are two-qubit entangling gates. 
Large graph states are typically built by combining smaller entangled fragments via fusion operations, which are probabilistic two‐qubit measurements. 
Such a fusion is the type-II fusion proposed in \cite{BrowneRudolph}, which is based on Bell states measurement \cite{FusionBasedQC,ThreePhoton}.

Generalized fusion has been introduced in \cite{bartolucci2021creationentangledphotonicstates,rimock2024generalized,schmidt2024generalizedfusionsphotonicquantum}, where the projection is not onto a Bell state, but rather on a general maximally-entangled state, resulting in the required fused state up to local qubit rotations.
The aim of this paper is to define and study generalized fusion for qudits,
i.e. a $d$-level quantum system. We will refer to qudits as photons, which allow a realization of qudits as spatial modes.
}\fi

In qudit MBQC,  the resource state is a qudit graph state built with qudits as nodes
and controlled-phase gates as edges, and single‐site measurements are in the $d$-dimensional Pauli  algebra
\cite{Zhou_2003}. It has been proven in \cite{Calsamiglia_2002}, that a straightforward generalization of Bell measurement to qudits is not applicable, as the reduced density matrix when partitioning the fused state to the two original $d$-graphs is at most of rank 2, even when adding vacuum ancillae. Furthermore, it has been proven that when using excited ancillae, the rank of reduced density matrix is bounded above by the number of qudits that are measured, including the ancillae. 
In this work we will generalize this proof for the generalized fusion process, in theorems \ref{Theorem:rank(rho) is less than 2},\ref{Theorem:rank is less or equal to M} and \ref{Theorem:rank is less or equal to M'}. Thus, for qudits one has to use at least $d-2$ ancillae, even for the generalized fusion. A scheme for that has been proposed in \cite{Luo_2019} using $d$ Bell pairs.

In the paper we refer to qudit for a $d$‑dimensional photonic mode, where measured qudits include both fused legs and any ancillae routed to detectors, and relevant outcomes are the multi‑click patterns intended to herald fusion. All our results assume passive, number‑preserving linear optics and number‑resolving detection; no feed‑forward or nonlinear interactions are required for the bounds to hold.

The paper is organized as follows: in section \ref{sec:Qudits Cluster} we review the concepts of qudits and qudit cluster state \ref{subsec:cluster state of qudits}. 
In section \ref{sec:Generalized fusion of qudits clusters} we begin by reviewing 
type II fusion of qubit clusters \ref{subsec:fusion type-II}. We then introduce our definition for the generalized fusion type-II of clusters of qudits, and perform an analytical analysis \ref{subsec:The d-bits based type 2 fusion process}. We prove that this fusion process can't produce the required fused cluster state of qudits \ref{subsec:Achieving maximally entangled final state is impossible unless}. Lastly, we add ancilla resources to the generalized fusion, carry out the analytical analysis and prove that at least $d-2$ ancilla qudits are required for a successful fusion, first when assuming the ancilla state is a product state of all the ancilla channels \ref{subsec:Measuring more than two qudits}, then for the general ancilla state \ref{subsec:Measuring more than two qudits with Entangled Ancilla Qudits}.  Section \ref{sec:discussion and outlook} is devoted to a discussion and outlook.

\section{Qudit Cluster States}
\label{sec:Qudits Cluster}
In this section briefly review the qudits, Pauli operators and their cluster states.

\subsection{Qudits}
\label{subsec:Qudits}

A qudit is a $d$-level quantum system, described by a $d$-dimensional Hilbert space $\mathcal{H}_d$ with a canonical basis
$\bigl\{\lvert0\rangle,\lvert1\rangle,\dots,\lvert d-1\rangle\bigr\}$.
A general qudit state takes the form:
\begin{equation}
    \ket{\phi}=\sum_{i=0}^{i=d-1} \alpha_i \ket{i} \ ,
\end{equation}
where $\Sigma_{i=0}^{i=d-1} \abs{\alpha_i}^2 =1$.
One defines a generalization of the Pauli operators that act on single qudits \cite{Wang_2020}:
\begin{eqnarray}
Z |j\rangle &=& \omega^j |j\rangle, \nonumber\\
X |j\rangle &=& |j+1 \mod d\rangle \ ,
\end{eqnarray}
where $\omega = e^{2\pi i/d}$ is the $d$th root of unity.
They satisfy $XZ = \omega ZX$. Explicitly,
\begin{eqnarray}
X=\sum_{j=0}^{d-1}\ket{j+1}\bra{j},\qquad
Z=\sum_{j=0}^{d-1}\omega^j\ket{j}\bra{j} \ .
\end{eqnarray}

\subsection{Qudit Cluster States}
\label{subsec:cluster state of qudits}
Given a graph, we associate with it a graph state which is an entangled 
quantum state of qubits:
\begin{gather}
    \ket{\phi}_C=\prod_{(a,b)\in E}\left(\ket{0}_a\bra{0}_a ç_b + \ket{1}_a\bra{1}_a Z_b\right) \prod_{c\in V} \ket{+}_c \ ,
\end{gather}
where $V$ is the set of vertices, $E$ the set of edges, $\ket{+}_c=\frac{\ket{0}_c+\ket{1}_c}{\sqrt{2}}$ is 
the qubit state, and $\mathbb{I}$ is the identity operator. This quantum state has an equivalent definition as the solution to a set of linear equations:
\begin{gather}
    X_a \prod_{(a,b)\in E} Z_b \ket{\phi}_C = \ket{\phi}_C \ ,
\end{gather}
for every $a\in V$. The operators $K_a=X_a \prod_{(a,b)\in E} Z_b$ are called stabilizers. 
Two-dimensional cluster states are graph states that have the structure of a two-dimensional grid, and have been
proven to provide a resource state for a universal MBQC quantum computation.
One can generalize the qubit cluster state to a qudit 
cluster as follows \cite{Zhou_2003}: 
\begin{gather}
    \ket{\phi}_C=\prod_{(a,b)\in E}S_{ab} \prod_{c\in V} \ket{+}_c \ ,
\end{gather}
where the operators $S_{ab}$ are:
\begin{gather}
    S_{ab}=\sum_{j,k=0}^{j,k=d-1} \omega^{jk}\ket{j}_a\ket{k}_b\bra{j}_a\bra{k}_b \ ,
\end{gather}
and the initial states of the qudits are:
\begin{gather}
    \ket{+}_c=\frac{1}{\sqrt{d}}\sum_{j=0}^{j=d-1} \ket{j} \ .
\end{gather}

This state has an equivalent definition as a solution  to a set of linear equations:
\begin{gather}
    K_a \ket{\phi}_C=\ket{\phi}_C \ , \nonumber \\
    K_a=X_{a}^{\dag}\prod_{(a.b)\in E} Z_{b} \ ,
\end{gather}
for every $a\in V$. Note, that for $d=2$ the Pauli operator $X$ is hermitian, $X^\dag =X$,  and this reduces correctly to the qubits cluster state definition.
As with qubits, the two-dimensional qudit cluster state is a resource state for a universal quantum computation \cite{Zhou_2003}.

\section{Generalized fusion of qudits clusters}
\label{sec:Generalized fusion of qudits clusters}

\subsection{Qubit Generalized  Type-II Fusion}
\label{subsec:fusion type-II}

The construction of resource states such as cluster states is fundamental to MBQC. Moreover,
generation of cluster states is required during computation since measurements erase parts of the cluster. Fusion gates allow the construction of large cluster states from smaller ones.
These gates are probabilistic: with a certain probability the fusion of the two clusters succeeds.
Otherwise the fusion fails and the clusters remain separated. Two basic fusion gates are the type-I and type-II fusions, proposed in \cite{Browne_2005}. Fusion type-I is used to construct one-dimensional clusters, and fusion type-II is employed to fuse two one-dimensional clusters to "+" shape, as in figure 4 in \cite{rimock2024generalized}.

Mathematically, the fusion type-II procedure is as follows. One begins
by choosing qubits $a$ and $b$ from the two one-dimensional clusters (where in fact $a$ is a part of a logical qubit consisting of $a$ and another qubit $e$). Each qubit is realized by horizontal and vertical polarizations of a certain frequency/location channel, where the photon being in the horizontal state, $a_H^\dag \ket{vac}$, is the computational state $\ket{0}$ and the photon being in the vertical state, $a_V^\dag \ket{vac}$, is the computational state $\ket{1}$. Then the creation operators associated with the channels of qubits $a$ and $b$ are transformed by a diagonal polarization beam splitter into which the photons of $a,b$ channels are directed.
\begin{eqnarray}
    \begin{bmatrix}
        a^\dag_H \\
        a^\dag_V \\
        b^\dag_H \\
        b^\dag_V
    \end{bmatrix} &=& \begin{bmatrix}
     U_{11} & U_{12} & U_{13} & U_{14} \\
     U_{21} & U_{22} & U_{23} & U_{24} \\
     U_{31} & U_{32} & U_{33} & U_{34} \\
     U_{41} & U_{42} & U_{43} & U_{44}
    \end{bmatrix}
    \begin{bmatrix}
        c^\dag_H \\
        c^\dag_V \\
        d^\dag_H \\
        d^\dag_V
    \end{bmatrix}, \nonumber\\
    U &=&\frac{1}{2}\begin{bmatrix}
    1 & 1 & 1 & -1 \\
     1 & 1 & -1 & 1 \\
     1 & -1 & 1 & 1 \\
     -1 & 1 & 1 & 1
\end{bmatrix} \ .
\label{eq:UnitaryTransformationForRegularFusionType2}
\end{eqnarray}

The next step is to measure the new channels $c_H^\dag,c_V^\dag,d_H^\dag,d_V^\dag$ that come out of the beam splitter. 
There is a $0.5$ probability of measuring one photon in two different channels out of the four possible channels, 
which is the result of applying a projection operator on one of the Bell states of $a,b$:
\begin{gather}
    \frac{1}{\sqrt{2}}\left(\ket{0}_a\ket{0}_b\pm \ket{1}_a\ket{1}_b\right), \frac{1}{\sqrt{2}}\left(\ket{0}_a\ket{1}_b\pm \ket{1}_a\ket{0}_b\right) \ .
\end{gather}
Applying this projection operator to the two cluster states results in the required fused cluster state. 
The other case, where one measures two photons in one of the four channels, has also $0.5$ probability and results in a failure of the fusion - the two clusters remain separated, and the full quantum state is the product of their appropriated quantum states.

Thus, the essence of fusion type-II is projecting onto Bell states, which is analogous to Bell-measurements. Using linear optical components, one can employ a more general unitary $U$ matrix in (\ref{eq:UnitaryTransformationForRegularFusionType2}) \cite{CreatingU0}, with the same aim getting a Bell state projection with as large probability of success as possible. The maximal probability has been proven to be indeed $0.5$ when using only linear elements \cite{Calsamiglia_2001}, and can be increased by adding ancilla qubits \cite{grice2011arbitrarily,Ewert_2014,Kilmer_2019,Bayerbach_2023}. In \cite{bartolucci2021creationentangledphotonicstates} a protocol involving projections onto states of $a,b$ that are not Bell states was introduced. This protocol used ancilla qubits, and surpassed the maximal probability of the standard type-II fusion (using the same number of ancilla qubits). In \cite{rimock2024generalized,schmidt2024generalizedfusionsphotonicquantum} the generalized fusion protocol of projecting on a general state of $a,b$ was further studied. It has been proven that without using ancilla qubits, the generalized fusion has the same maximal probability of success, $0.5$, as the standard one \cite{rimock2024generalized,Calsamiglia_2001}. Also, it has been shown numerically that by adding ancilla qubits, the probability of success of the generalized fusion is higher than that of the regular fusion with same number of ancilla qubits \cite{schmidt2024generalizedfusionsphotonicquantum}. In both cases the aim of the generalized fusion is to project onto a maximally-entangled state of $a,b$.

For our further analysis in this paper, we approach generalized fusion type-II as follows. We denote by $X_1,X_2$ the Hilbert spaces that are the tensor products of the Hilbert spaces of the qubits of the first and second original cluster, respectively. We also denote by $W_1$,$W_2$ the Hilbert spaces of the qubits $a$ and $b$, respectively, i.e $W_1=span\left({a_H^\dag \ket{vac},a_V^\dag \ket{vac}}\right)$ and $W_2=span\left({b_H^\dag \ket{vac},b_V^\dag \ket{vac}}\right)$. That means one can write $X_1=V_1\otimes W_1$ and $X_2=V_2\otimes W_2$, where $V_1,V_2$ are the tensor products of the Hilbert spaces of the qubits of the first and second original cluster without the qubits $a$ and $b$, respectively. Then, using linear elements is equivalent to applying some unitary on $span\left({a_H,a_V,b_H,b_V}\right)$ that transfers between the bases $a_H,a_V,b_H,b_V$ and $c_H,c_V,d_H,d_V$. In fact, $c_H,c_V,d_H,d_V$ will not be in $span\left({a_H,a_V,b_H,b_V}\right)$ but in a same dimension Hilbert space, but we can identify it with $span\left({a_H,a_V,b_H,b_V}\right)$. Finally, one measures the channels $c_H,c_V,d_H,d_V$, and as a result is left with a wave function in the Hilbert space $V_1 \otimes V_2$, which we desire to be a cluster state of the qubits of the two original clusters except for $a,b$. If one is using $N$ vacuum ancilla qubits, then one does an isometry from $span\left({a_H,a_V,b_H,b_V}\right)$ to $span\left({c_H,c_V,d_H,d_V,c_1,...,c_N}\right)$, which is equivalent to a unitary from $span\left({a_H,a_V,b_H,b_V,c_1,...,c_N}\right)$ to itself. Notice that $a_H,a_V$ being two polarizations of the same channel (and the same for $b_H,b_V$) is not important for the mathematical structure of the process, only for its realization; so one can also think of the channels $a_H,a_V,b_H,b_V$ simply as four different channels.

This analysis can also be done for qudits. Protocols for fusion type-II of qudits were introduced in \cite{Luo_2019,bharoshigh,Bharos_2025,_st_n_2025}. It was proven that a Bell-measurement of qudits requires at least $d-2$ ancilla qudits \cite{Calsamiglia_2002}, which means that the same minimal number of ancilla qudits is required for fusion type-II. In the following, we will analyze generalized type-II fusion of qudits, and show that the same requirement of at least $d-2$ ancilla qudits still holds.

\subsection{Qudit Generalized Type-II Fusion}

Here, we consider a generalized case of Fusion type II, where we have two qudits clusters with quantum states $\ket{\Phi_1}$ and $\ket{\Phi_2}$ in Hilbert spaces $X_1$ and $X_2$, respectively. We choose from each cluster a single qudit, $e_1$ and $e_2$, and denote their Hilbert spaces as $W_1$ and $W_2$. Thus, we write $X_i=V_i\otimes W_i$ when $V_i$ is the Hilbert spaces of all the qudits in the cluster $i$ that differ from $e_i$. Every qudit is realized by $d$ different modes, each one representing one of the states $\ket{0},...,\ket{d-1}$ when there is one photon occupying it, and in relation to qubits we can think of every qudit as some photon spatial/frequency mode with $d$ polarizations, hence $d$ appropriate creation operators.

The fusion process consists of:
i) Performing isometry transformation from the space $W_1\oplus W_2$ to some bigger Hilbert space. We can assume, without loss of generality, that the bigger Hilbert space contains $W_1\oplus W_2$. We can complete this to a unitary transformation from $W_1\oplus W_2\oplus W_3$ to itself, where $W_3$ is some Hilbert space that represents vacuum ancilla qudits. 
ii) Measuring the numbers of photons in all the channels of the new basis of $W_1\oplus W_2\oplus W_3$ that we got after the transformation. We will get from the measurement two photons, either in two different modes (which we will refer to as relevant states) or in the same mode (which we will refer to as non-relevant states). Here by 'mode' we refer to a specific 1-dim Hilbert space, not a mode with some polarizations.
For each possible outcome of our measurement, the resulting state is in $V_1\otimes V_2$, which we aim to be a qudit cluster state (up to single-qudit rotations). Each outcome has its probability, and our goal is to maximize the probability to get qudit cluster state. But, as proven below, when $d>2$ all the possible final states are necessarily not qudit cluster states, thus the generalization of fusion type II to qudits is not possible. 

In the next subsection \ref{subsec:The d-bits based type 2 fusion process}, we begin by an analytical analysis of the possible resulting states of the fusion, and the probability and entanglement entropy of each one of them.  This may seem odd, because in subsection \ref{subsec:Achieving maximally entangled final state is impossible unless} we prove that this fusion can't work - the resulting state will necessarily not be the required fused cluster state. But, some of those results are more general by those we achieved in the analytical analysis in \cite{rimock2024generalized}, and more importantly, we believe that subsections \ref{subsec:The d-bits based type 2 fusion process} and \ref{subsec:Achieving maximally entangled final state is impossible unless} are helpful for easily understanding the generalization arguments by adding ancilla qudits in subsection \ref{subsec:Measuring more than two qudits}. In subsection \ref{subsec:Measuring more than two qudits} we perform the analytical analysis of the generalized type-II fusion with ancilla qudits, and prove that at least $d-2$ ancilla qudits need to be used in order to get a successful fusion.

\subsection{Type II Fusion of $D$-Qudit Graph States}
\label{subsec:The d-bits based type 2 fusion process}
Consider a general data, $X_i,V_i,W_i,\ket{\Phi_i}, i=1,2$. In the following
we will not require a maximal entanglement entropy of the fused
state, where the entanglement entropy is computed by the partition of $V_1\otimes V_2$ to $V_1$ and $V_2$.
Denote $n_i=\dim{V_i}$ and $m_i=\dim{W_i}$, where $m_i=d$ and  $k_i=min(n_i,m_i)$ for a qudit cluster state. By the Schmidt decomposition there are orthonormal bases $\{\ket{\phi_{1,i}}\}_{i=1}^{n_1}$, $\{\ket{\phi_{2,i}}\}_{i=1}^{n_2}$, $\{\ket{\psi_{1,i}}\}_{i=1}^{m_1}$ and $\{\ket{\psi_{2,i}}\}_{i=1}^{m_2}$ for $V_1$, $V_2$, $W_1$ and $W_2$, respectively, and non-negative real numbers $\{\alpha_i\}_{i=1}^{k_1}$ and $\{\beta_i\}_{i=1}^{k_2}$, such that the quantum states of the two clusters before the fusion are:
\begin{gather}
    \ket{\Phi_1}=\sum_{i=1}^{k_1} \alpha_{i} \ket{\phi_{1,i}} \ket{\psi_{1,i}},
    \nonumber \\
    \ket{\Phi_2}=\sum_{j=1}^{k_2} \beta_{j} \ket{\phi_{2,j}} \ket{\psi_{2,j}} \ .
    \label{eq:The original wave functions of the original clusters}
\end{gather}
The quantum state of the two clusters before the fusion
is the product state:
\begin{gather}
  \ket{\Phi}=
  \underbrace{\left( {\sum_{i=1}^{k_1} \alpha_{i} \ket{\phi_{1,i}} \ket{\psi_{1,i}}}\right)}_{|\Phi_1\rangle} \underbrace{\left(\sum_{j=1}^{k_2} \beta_{j} \ket{\phi_{2,j}} \ket{\psi_{2,j}}\right)}_{|\Phi_2\rangle} \ .
    \label{eq:Total wave function before changing bases and fusion}
\end{gather}
If $\ket{\Phi_i}$ correspond to $D$-dimensional qudit clusters, the entanglement entropy when 
partitioning $X_i$ to $V_i$,$W_i$ is maximal, hence the $\alpha$'s are equal to $\frac{1}{\sqrt{k_1}}$, the $\beta$'s are equal to $\frac{1}{\sqrt{k_2}}$, and  $k_1=k_2=D$.

The generalizes type II fusion is performed by a unitary transformation from $W_1\oplus W_2 \oplus W_3$ to itself, where $W_3$ is the vacuum ancilla Hilbert space. This transformation is equivalent to a unitary transformation on the creation operators of the modes, which will be denoted by $U^{\dag}$. 
We can associate every quantum state $\ket{\phi_{1,i}}$, $\ket{\phi_{2,i}}$ with creation operator $a_i^{\dag}$, $b_i^{\dag}$ respectively. Those operators are the channels before operating with $U$.  
After operating with $U^\dag$ we replace $\{a_i^{\dag}\}_{i=1}^{i=k_1}$ and $\{b_i^{\dag}\}_{i=1}^{i=k_2}$ with new channels $\{c_i\}_{i=1}^{i=k_1+k_2}$ such that:
\begin{gather}
    \begin{bmatrix}
        a_1^{\dag} \\
        \vdots \\
        a_{k_1}^{\dag} \\
        b_1^{\dag} \\
        \vdots \\
        b_{k_2}^{\dag} \\
        vac_1^{\dag} \\
        \vdots \\
        vac_{k_3}^{\dag} 
    \end{bmatrix}
    = U
    \begin{bmatrix}
        c_1 \\
        \vdots \\
        c_{K} 
    \end{bmatrix} \ .
    \label{eq:Fusion matrix U}
\end{gather}
when $K=k_1+k_2+k_3$ and the quantum state (\ref{eq:Total wave function before changing bases and fusion}) takes the form:
\begin{align}
    \ket{\Phi} & = \left(\sum_{i=1}^{k_1}\sum_{k=1}^{K} U_{i,k}\alpha_i \ket{\phi_{1,i}}c_k^{\dag}\right) \nonumber \\
    &  \cdot \left(\sum_{j=k_1+1}^{K}\sum_{l=1}^{k_1+k_2}U_{j,l}\alpha_j \ket{\phi_{2,j}}c_l^{\dag}\right) \ .
\end{align}


Denote:
\begin{gather}
    a_{ij,kl}=U_{ik}U_{jl}+U_{il}U_{jk}
    \label{eq:aijkl of relevant state kl for general alpha and beta} \ .
\end{gather}
Then the quantum state $\ket{\Phi}$ can be written as:
\begin{eqnarray}
&\sum_{1 \leq k < l \leq K} \sum_{i=1, j=k_1+1}^{i=k_1, j=k_1+k_2} \alpha_i \beta_j a_{ij,kl} \ket{\phi_{1,i}} \ket{\phi_{2,j}} c_k^{\dag} c_l^{\dag} 
    \nonumber \\
    &+\frac{1}{2}\sum_{1 \leq k \leq K} \sum_{i=1, j=k_1+1}^{i=k_1, j=k_1+k_2} \alpha_i \beta_j a_{ij,kk} \ket{\phi_{1,i}} \ket{\phi_{2,j}}c_k^{\dag}c_k^{\dag}, \nonumber
    \end{eqnarray}
    and can be recast as:
    \begin{eqnarray}
   &\sum_{1 \leq k < l \leq K} \sum_{i=1, j=k_1+1}^{i=k_1, j=k_1+k_2} \alpha_i \beta_j a_{ij,kl} \ket{\phi_{1,i}} \ket{\phi_{2,j}} \ket{1}_k \ket{1}_l + \nonumber \\
   & +\frac{\sqrt{2}}{2}\sum_{1 \leq k \leq K} \sum_{i=1, j=k_1+1}^{i=k_1, j=k_1+k_2} \alpha_i \beta_j a_{ij,kk} \ket{\phi_{1,i}} \ket{\phi_{2,j}}\ket{2}_k .
\end{eqnarray}

Next, we measure all the $c_i$ channels. Given that we measured a qudit in channel $k$ and a qudit in channel $l$ ($k$ and $l$ can be the same channel), the resulting quantum state is:
\begin{gather}
    \label{eq:wave function kl for general alpha and beta}
    \ket{\Phi}_{kl}=\frac{1}{N_{kl}}\sum_{i=1, j=k_1+1}^{i=k_1, j=k_1+k_2} \alpha_i \beta_j a_{ij,kl} \ket{\phi_{1,i}} \ket{\phi_{2,j}} \ ,
\end{gather}
where $N_{kl}$ is a normalization factor:
\begin{gather}
    N_{kl}=\sqrt{\sum_{i=1, j=k_1+1}^{i=k_1, j=k_1+k_2} \abs{\alpha_i \beta_j a_{ij,kl}}^2} 
    \label{eq:normalization factor of relevant state kl for general alpha and beta} \ .
\end{gather}

For the simplicity of notations, in the following we will refer to the state (\ref{eq:wave function kl for general alpha and beta}) as a $(k,l)$ state. Note, that when $k=l$ one can write the state (\ref{eq:wave function kl for general alpha and beta}) as:
\begin{gather}
    \ket{\Phi}_{kk}=\frac{2}{N_{kk}}\left(\sum_{i=1}^{i=k_1}  \alpha_i U_{ik}\ket{\phi_{1,i}}  \right) \left( \sum_{j=k_1+1}^{j=k_1+k_2}  \beta_i U_{jk}\ket{\phi_{2,j}} \right) \ ,
    \label{eq:wave function of non-relevant state kk for general alpha and beta}
\end{gather}
which means that it is a product state. Hence, in the following we will refer to the states $(k,k)$ as {\it irrelevant states}, and for the other states $(k,l)$ with $k\ne l$, which  are potentially entangled, we shall refer to as {\it relevant states}.

The probability to get the relevant state $\ket{\Phi}_{kl}$ is:
\begin{gather}
    p_{kl}=N_{kl}^2=\sum_{i=1, j=k_1+1}^{i=k_1, j=k_1+k_2} \abs{\alpha_i \beta_j a_{ij,kl}}^2 \ ,
    \label{eq:probability of relevant state kl for general alpha and beta}
\end{gather}
and the probability to get the irrelevant state $\ket{\Phi}_{kk}$ is:
\begin{gather}
    p_{kk}=\frac{N_{kk}^2}{2}=2 \left(\sum_{i=1}^{i=k_1}  \abs{\alpha_i U_{ik}}^2  \right) \left( \sum_{j=k_1+1}^{j=k_1+k_2}  \abs{\beta_j U_{jk}}^2 \right) \ .
    \label{eq:probability of non-relevant state kk for general alpha and beta}
\end{gather}

The reduced density matrix $\rho_{kl}$ which is obtained by tracing out $\ket{\phi_{2,j}}$ from $\ket{\Phi}_{kl}\bra{\Phi}_{kl}$ reads:
\begin{gather}
    \rho_{kl}=\sum_{1 \leq i_1,i_2 \leq k_1} \left(\rho_{kl}\right)_{i_1 i_2} \ket{\phi_{1,i_2}} \bra{\phi_{1,i_1}},
    \nonumber \\
    \left(\rho_{kl}\right)_{i_1 i_2}=\frac{1}{N_{kl}^2}\sum_{j=k_1+1}^{k_2} \alpha_{i_1} \beta_{j} a_{i_1 j, kl}^* \alpha_{i_2} \beta_{j} a_{i_2 j, kl} \ .
    \label{eq:reduced density matrix elemnt i1 i2 of relevant state kl for general alpha and beta}
\end{gather}
The entanglement entropy of the partition of $V_1\otimes V_2$ to $V_1$ and $V_2$ reads:
\begin{gather}
    S_{kl}=-\Tr{\rho_{kl} \ln{\rho_{kl}}} \ ,
\end{gather}
where $\rho_{kl}$ is the reduced density matrix (\ref{eq:reduced density matrix elemnt i1 i2 of relevant state kl for general alpha and beta}).
\begin{theorem}
    The probability for the final state to be a product state is always greater than zero.
\end{theorem}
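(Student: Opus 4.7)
The plan is to prove this by a clean dichotomy on whether any irrelevant outcome $(k,k)$ has positive probability. If some $k$ has $p_{kk}>0$, we are immediately done, because equation (\ref{eq:wave function of non-relevant state kk for general alpha and beta}) exhibits $\ket{\Phi}_{kk}$ explicitly as a tensor product across $V_1\otimes V_2$, so one product-state outcome is reached with positive probability. The content of the theorem therefore lies in the complementary case where $p_{kk}=0$ for every $k\in\{1,\dots,K\}$, and I will argue that there \emph{every} outcome reached with positive probability is already a product state.

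To handle that second case, I would first extract a structural constraint on the fusion unitary $U$. Since $\{\alpha_i\}$ and $\{\beta_j\}$ come from Schmidt decompositions they are strictly positive on their support, so the factorized form $p_{kk}=2\,A_k B_k$ in (\ref{eq:probability of non-relevant state kk for general alpha and beta}), with $A_k=\sum_{i=1}^{k_1}|\alpha_i U_{ik}|^2$ and $B_k=\sum_{j=k_1+1}^{k_1+k_2}|\beta_j U_{jk}|^2$, forces for each column $k$ that either the entire top-$k_1$ block of $U_{\cdot,k}$ vanishes, or its entire middle-$k_2$ block vanishes. Thus every column of $U$ is "type $1$" (top block zero) or "type $2$" (middle block zero), and only the bottom ancilla block is shared.

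Given this block-decoupling I would then analyze an arbitrary relevant outcome $(k,l)$ with $k\ne l$ using the amplitude $a_{ij,kl}=U_{ik}U_{jl}+U_{il}U_{jk}$ from (\ref{eq:aijkl of relevant state kl for general alpha and beta}). Four subcases appear according to the types of columns $k$ and $l$. If both are type $1$ or both are type $2$, each summand in $a_{ij,kl}$ vanishes for all $(i,j)$ in the relevant index range, so $N_{kl}=0$ and the outcome has zero probability. If one column is type $1$ and the other type $2$, exactly one of the two terms survives, and plugging this back into (\ref{eq:wave function kl for general alpha and beta}) yields $\ket{\Phi}_{kl}\propto\bigl(\sum_i\alpha_i U_{ik}\ket{\phi_{1,i}}\bigr)\bigl(\sum_j\beta_j U_{jl}\ket{\phi_{2,j}}\bigr)$, a product state across $V_1\otimes V_2$. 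Since $\sum_{k,l}p_{kl}=1$, some outcome has positive probability, and by the above every such outcome is a product state; so in this case the product-state probability equals one.

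The main obstacle I anticipate is cleanly justifying the block-decoupling step: it hinges on all Schmidt coefficients being strictly positive (which is automatic from the Schmidt decomposition but must be invoked), and one has to confirm that the "same-type" subcases really do have vanishing norm rather than just a vanishing contribution to certain matrix elements. Once that is nailed down, the theorem follows from the dichotomy without any further computation, and in fact the argument yields the stronger statement that the product-state probability is either positive due to an irrelevant outcome or equal to one due to the block-decoupled structure of $U$.
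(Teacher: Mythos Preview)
Your proof is correct and follows essentially the same route as the paper. The paper packages the key step as a lemma showing that $p_{kk}=0$ for a \emph{single} $k$ already forces every relevant state $(k,l)$ to factor (working with the products $\alpha_i U_{ik}$ directly, so strict positivity of the Schmidt coefficients is never invoked), and then runs the same contradiction; your four-subcase column-type analysis is a slightly more symmetric but equivalent organization of the identical mechanism.
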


\begin{proof}
    We will use the following Lemma:
    \begin{lemma}
        \label{lemma:If pkk=0 then phikl is a product state}
        If the probability for the kth irrelevant state is zero, $p_{kk}=0$, then every relevant state $(k,l)$ is a product state.
    \end{lemma}

    \begin{proof}
        If $p_{kk}=0$ then from  (\ref{eq:probability of non-relevant state kk for general alpha and beta}) it follows that $\alpha_i U_{ik}=0$ for every $1\leq i \leq k_1$, or $\beta_j U_{jk}=0$ for every $k_1+1\leq j \leq k_1+k_2$. If $\alpha_i U_{ik}=0$ for every $1\leq i \leq k_1$, then by (\ref{eq:aijkl of relevant state kl for general alpha and beta}) $\alpha_ia_{ij,kl}=\alpha_i U_{il} U_{jk}$. Substituting this in (\ref{eq:wave function kl for general alpha and beta}) gives the resulting final relevant state $(k,l)$:
        \begin{gather}
            \ket{\Phi}_{kl}=\frac{1}{N_{kl}}\sum_{i=1, j=k_1+1}^{i=k_1, j=k_1+k_2} \alpha_i \beta_j U_{il} U_{jk} \ket{\phi_{1,i}} \ket{\phi_{2,j}}
            \nonumber \\
            =\frac{1}{N_{kl}} \left(\sum_{i=1}^{i=k_1} \alpha_i U_{il} \ket{\phi_{1,i}} \right) \left(\sum_{j=k_1+1}^{j=k_1+k_2} \beta_j U_{jk} \ket{\phi_{2,j}} \right) \ ,
        \end{gather}
        which is a product state. Using the same argument, the relevant state $(k,l)$ will be a product state in the case where $\beta_j U_{jk}=0$ for every $k_1+1\leq j \leq k_1+k_2$.
    \end{proof}
    Next, suppose that the probability for the final state to be a product state is zero. Thus, the probability for every irrelevant state is zero, $p_{kk}=0$ for every $k$. From the Lemma \ref{lemma:If pkk=0 then phikl is a product state} it follows that every relevant state is a product state, thus the final state will be a product state with probability $1$, leading to a contradiction.
\end{proof}

\if{
{\color{blue}It is worth noting that by the graph in figure 11 in \cite{rimock2024generalized} (attached as figure \ref{fig:PvsS_threshold}), it appears that the probability for a product state can be smaller from any chosen positive number --- because it seems that for every probability $p$ less than one there exist a target entropy with $p=P(S>S_{target})$, which means there exist some unitary $U$ from size $4$ on $4$ with probability $p$ of entanglement entropy not below the target entropy, hence probability at most $1-p$ of a product state, where $1-p$ can be small as we desire.}

\begin{figure}
    \includegraphics[width=1.\linewidth] {fig/plot_experiment_prob_above_th_v_s_optimized_haar_measure.png}
    \caption{a}
    \label{fig:PvsS_threshold}
\end{figure}
}\fi
\subsection{Maximally Entangled Final State is Possible only for $k_1=k_2=2$}
\label{subsec:Achieving maximally entangled final state is impossible unless}
In the following we will analyze the conditions for a specific relevant state $(k,l)$ (\ref{eq:wave function kl for general alpha and beta}) to be maximally entangled, and  
generalize the proof in \cite{Calsamiglia_2002}.

\begin{theorem}
    Unless $k_1,k_2 \leq 2$, the resulting final state of the fusion (\ref{eq:wave function kl for general alpha and beta}) can never be maximally entangled. That is, $\rho_{kl}$ cannot be a scalar matrix, set aside the case where $k_1< k_2$ and $\rho_{kl}$ is a scalar matrix when we trace out $\ket{\phi_{2,j}}$, but not when we trace out $\ket{\phi_{1,i}}$.
\end{theorem}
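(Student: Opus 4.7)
The plan is to exploit the algebraic form $a_{ij,kl}=U_{ik}U_{jl}+U_{il}U_{jk}$ to show that the $k_1\times k_2$ coefficient matrix of the state $\ket{\Phi}_{kl}$ has rank at most $2$. This cap on rank translates directly into a cap on $\mathrm{rank}(\rho_{kl})$, and a scalar $\rho_{kl}$ must have full rank $k_1$, so $k_1\le 2$ is forced (with the symmetric statement for $k_2$ coming from tracing out the other subsystem).

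First I would introduce the column vectors $u,v\in\mathbb{C}^{k_1}$ defined by $u_i=\alpha_i U_{ik}$ and $v_i=\alpha_i U_{il}$, and $x,y\in\mathbb{C}^{k_2}$ defined by $x_j=\beta_j U_{jk}$ and $y_j=\beta_j U_{jl}$. The coefficient matrix appearing in (\ref{eq:wave function kl for general alpha and beta}) is then
\begin{equation}
M_{ij}\;=\;\alpha_i\beta_j\,a_{ij,kl}\;=\;u_i y_j+v_i x_j,
\end{equation}
which is manifestly an outer-product sum $M=uy^{T}+vx^{T}$ of rank at most $2$.

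Next I would identify the reduced density matrix (\ref{eq:reduced density matrix elemnt i1 i2 of relevant state kl for general alpha and beta}) as $\rho_{kl}=\tfrac{1}{N_{kl}^{2}}MM^{\dagger}$, so that $\mathrm{rank}(\rho_{kl})\le\mathrm{rank}(M)\le 2$. If $\rho_{kl}$ is to be a scalar matrix on $V_1$ it must be a positive multiple of $I_{k_1}$ and therefore have full rank $k_1$, giving $k_1\le 2$. Tracing out $V_1$ instead yields the Gram matrix $M^{\dagger}M$, which has the same rank as $MM^{\dagger}$, so the identical argument forces $k_2\le 2$ whenever the reduced density matrix on the $V_2$ side is scalar. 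Requiring both reductions to be scalar, i.e.\ genuine maximal entanglement, therefore forces $k_1,k_2\le 2$.

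The subtlety the theorem singles out is the asymmetric case $k_1<k_2$: a rank-$2$ matrix $M$ can still produce a scalar reduced density matrix on the smaller side $V_1$ (when $k_1\le 2$), while the partner density matrix on $V_2$ is rank-bounded by $2<k_2$ and cannot be proportional to $I_{k_2}$. This loophole drops out of the same rank inequality with no extra work, which is exactly why the statement excludes it. There is no real obstacle beyond writing $M$ cleanly as a sum of two outer products and carefully tracking which subsystem is traced out; the entire argument is a single-line rank count that tightens the qubit-only no-go of \cite{Calsamiglia_2001} into the claimed qudit statement.
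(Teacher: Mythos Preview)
Your proof is correct and shares the same linear-algebraic core as the paper's, though packaged differently. The paper argues by contradiction: it rewrites $\rho_{kl}=\tfrac{1}{k_1}I$ as $v_{i_1}^{\dagger} A\, v_{i_2}=\delta_{i_1 i_2}$ for length-two vectors $v_i=(\alpha_i U_{ik},\,\alpha_i U_{il})^{T}$ and a $2\times 2$ Hermitian matrix $A$ built from the $\beta_j U_{j\cdot}$, then notes that for $k_1>2$ the $v_i$ must be linearly dependent in $\mathbb{C}^{2}$, contradicting their $A$-orthonormality. Your outer-product factorization $M=uy^{T}+vx^{T}$ is the same content viewed from the other side---your $[\,u\ \ v\,]$ is precisely $V^{T}$ in the paper's notation---but it is more direct: it delivers $\mathrm{rank}(\rho_{kl})\le 2$ in one stroke, a fact the paper establishes only in the subsequent Theorem~\ref{Theorem:rank(rho) is less than 2} and the Remark following it (where the factorization $\rho_{kl}=\tfrac{1}{k_1}V^{\dagger}AV$ finally appears). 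In effect you have merged the present theorem and Theorem~\ref{Theorem:rank(rho) is less than 2} into a single rank count; what the paper's longer route buys is the explicit bilinear-form scaffolding that it then lifts verbatim to the $M$-qudit setting in Section~\ref{subsec:Measuring more than two qudits}.
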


\begin{proof}
    Assume the state (\ref{eq:wave function kl for general alpha and beta}) to be maximally entangled, thus the reduced density matrix $\rho_{kl}$ (\ref{eq:reduced density matrix elemnt i1 i2 of relevant state kl for general alpha and beta}) is proportional to the identity matrix:
\begin{equation}
    \left(\rho_{kl}\right)_{i_1 i_2} =\frac{1}{k_1} \delta _ {i_1 i_2} \ .
    \label{eq:reduced density matrix of relevant state kl for general alpha and beta equals to identity matrix}
\end{equation}
Define the variables:
\begin{gather}
    X_{k'l'}=\frac{k_1}{N_{kl}^2}\sum_{j=k_1+1}^{j=k_2} |\beta_j |^2 U_{jk'}^{*} U_{jl'} \ ,
    \nonumber \\
    V_{s k'} = \alpha_s U_{s k'} \ ,
    \label{eq:X element and V element of relevant state kl for general alpha and beta}
\end{gather}
with which we can write the reduced density matrix elements $\left(\rho_{kl}\right)_{i_1 i_2}$ (\ref{eq:reduced density matrix elemnt i1 i2 of relevant state kl for general alpha and beta}) as:
\begin{gather}
    k_1 \left(\rho_{kl}\right)_{i_1 i_2}= X_{ll} V_{i_1 k}^* V_{i_2 k} + X_{kk} V_{i_1 l}^* V_{i_2 l}
    \nonumber \\
    + X_{lk} V_{i_1 k}^* V_{i_2 l} + X_{kl} V_{i_1 l}^* V_{i_2 k} \ .
    \label{eq:reduced density matrix elemnt i1 i2 of relevant state kl for general alpha and beta by X and V}
\end{gather}

Define the length two vectors $\{v_i\}_{i=1}^{i=k_1}$:
\begin{gather}
    v_i=
    \begin{bmatrix}
        V_{i k} \\
        V_{i l}
    \end{bmatrix} \ ,
    \label{eq:vector vi for relevant state kl for general alpha and beta}
\end{gather}
and the $2$ by $2$ matrix:
\begin{gather}
    A=
    \begin{bmatrix}
        X_{ll} & X_{lk} \\
        X_{kl} & X_{kk}
    \end{bmatrix}
    =\begin{bmatrix}
        X_{ll} & X_{kl}^* \\
        X_{kl} & X_{kk}
    \end{bmatrix} \ .
    \label{eq:A matrix of relevant state kl for general alpha and beta}
\end{gather}
Equation (\ref{eq:reduced density matrix of relevant state kl for general alpha and beta equals to identity matrix}) becomes:
\begin{gather}
    v_{i_1}^{\dag} A v_{i_2} = \delta_{i_1 i_2} \ .
\end{gather}

If $k_1>2$ then the vectors $\{v_i\}_{i=1}^{i=k_1}$ are linearly dependent because they belong to a two-dimensional vector space. Thus, there exist complex numbers $\{z_i \}_{i=1}^{i=k_1}$ which are not all zero such that
\begin{equation}
    \sum_{i=1}^{i=k_1} z_i v_i = 0 \ .
    \label{eq:linearly dependency of vi}
\end{equation} 
Multiplying (\ref{eq:linearly dependency of vi}) by $v_i^{\dag} A$ gives:
\begin{equation}
    0=v_i^{\dag} A \sum_{i'=1}^{i'=k_1} z_{i'} v_{i'} = \sum_{i'=1}^{i'=k_1} z_{i'} v_i^{\dag} A v_{i'} = \sum_{i'=1}^{i'=k_1} z_{i'} \delta_{i i'} = z_{i} \ ,
\end{equation}
therefore all the $z_i$ are zero, hence a contradiction. Thus, $k_1 \leq 2$. Similarly we prove that $k_2 \leq 2$. 
\end{proof}

\begin{theorem}
    The rank of the reduced density matrix (\ref{eq:reduced density matrix elemnt i1 i2 of relevant state kl for general alpha and beta}) is less or equal to $2$. 
    \label{Theorem:rank(rho) is less than 2}
\end{theorem}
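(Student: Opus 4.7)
The plan is to show directly that $\ket{\Phi}_{kl}$ has Schmidt rank at most $2$ across the $V_1 \mid V_2$ bipartition; the rank bound on $\rho_{kl}$ then follows since the rank of the reduced density matrix of a pure bipartite state coincides with its Schmidt rank.

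First, I would exploit the explicit factorization $a_{ij,kl} = U_{ik}U_{jl} + U_{il}U_{jk}$ from (\ref{eq:aijkl of relevant state kl for general alpha and beta}). Substituting this into (\ref{eq:wave function kl for general alpha and beta}) splits $\ket{\Phi}_{kl}$ into two summands, and in each summand the coefficients cleanly separate into a part depending only on the $V_1$ index $i$ and a part depending only on the $V_2$ index $j$. Introducing the unnormalized vectors
\begin{equation}
    \ket{A_r} = \sum_{i=1}^{k_1} \alpha_i U_{ir}\ket{\phi_{1,i}} \in V_1, \quad \ket{B_r} = \sum_{j=k_1+1}^{k_1+k_2} \beta_j U_{jr}\ket{\phi_{2,j}} \in V_2,
\end{equation}
for $r\in\{k,l\}$, this manipulation yields the compact expression
\begin{equation}
    \ket{\Phi}_{kl} = \frac{1}{N_{kl}}\bigl(\ket{A_k}\otimes\ket{B_l} + \ket{A_l}\otimes\ket{B_k}\bigr).
\end{equation}

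Being a sum of two product vectors in $V_1 \otimes V_2$, $\ket{\Phi}_{kl}$ has Schmidt rank at most $2$, and therefore $\rho_{kl}$ has rank at most $2$. There is no real obstacle in this argument: the decomposition of $a_{ij,kl}$ already encodes all the structure needed, and the rank conclusion is elementary linear algebra. The deeper content is physical: a passive, number-preserving interferometer followed by detection of exactly two photons can generate at most two distinct routing branches on the retained cluster legs, capping the Schmidt rank at $2$. This is precisely what rules out direct qudit fusion for $d>2$ and motivates introducing ancilla qudits, generalizing naturally to the bound of Theorem \ref{Theorem:rank is less or equal to M}, where the rank ceiling becomes $M$, the total number of detected photons.
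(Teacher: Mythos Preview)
Your argument is correct. The paper's main proof proceeds differently, working directly with the reduced density matrix: it defines $u_k,u_l\in\mathbb{C}^{k_1}$ with $(u_r)_i=\alpha_i U_{ir}$, sets $D=\spn(u_k,u_l)$, and verifies by explicit computation that every $z\in D^{\perp}$ satisfies $\rho_{kl}\,z=0$, whence $\mathrm{rank}(\rho_{kl})\le\dim D\le 2$. A Remark immediately following the proof records the closely related factorization $\rho_{kl}=\tfrac{1}{k_1}V^{\dag}AV$ with $V$ of size $2\times k_1$, which is the density-matrix shadow of your state-level decomposition. Your route is the most economical of the three: splitting $a_{ij,kl}$ exhibits $\ket{\Phi}_{kl}$ as a sum of two product vectors and the rank bound is immediate. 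The paper's kernel argument is less direct here but is exactly the template reused for Theorem~\ref{Theorem:rank is less or equal to M}; your approach also generalizes, provided one adds the observation that although the $M$-photon state is a sum of $M!$ product terms, their $V_1$ factors range over only the $M$ vectors $\ket{A_{l_1}},\dots,\ket{A_{l_M}}$, which again caps the Schmidt rank at $M$.
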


\begin{proof}

Define the $k_1$ length vectors:
   \begin{equation}
        u_k=(V_{1k},V_{2k},...,V_{k_1 k}), \hspace{0.2cm} u_l=(V_{1l},V_{2l},...,V_{k_1 l}) \ .
    \end{equation}
        Denote $D=span\left(u_k,u_l\right)$, which is a subspace of $\mathbb{C}^{k_1}$ with dimension $\leq 2$. Hence, the dimension of the orthogonal complement of $D$, $D^{\perp}$, is $\geq k_1-2$. Choose some vector $z\in D^{\perp}$ and denote its elements as $z_1, z_2, ... z_{k_1}$ that satisfy (\ref{eq:linearly dependency of vi}). Now, compute $\rho_{kl} z$ using (\ref{eq:reduced density matrix elemnt i1 i2 of relevant state kl for general alpha and beta}):
            \begin{gather}
        \left(\rho_{kl}z\right)_{i_1} = \sum_{i_2=1}^{k_1} (\rho_{kl})_{i_1 i_2} z_{i_2} =\sum_{i_2=1}^{k_1} \frac{1}{k_1}( X_{ll} V_{i_1 k}^* V_{i_2 k} 
        \nonumber \\
        + X_{kk} V_{i_1 l}^* V_{i_2 l}
        + X_{lk} V_{i_1 k}^* V_{i_2 l} + X_{kl} V_{i_1 l}^* V_{i_2 k}) z_{i_2} \ ,
        \end{gather}
        which can be recast as:
        \begin{gather}
       \left(\rho_{kl}z\right)_{i_1} =\frac{1}{k_1} \left(X_{ll} V_{i_1 k}^* + X_{kl} V_{i_1 l}^* \right) \sum_{i_2=1}^{i_2=k_1} z_{i_2} V_{i_2 k} 
        \nonumber \\
        +\frac{1}{k_1} \left(X_{kk} V_{i_1 l}^* 
        + X_{lk} V_{i_1 k}^* \right) \sum_{i_2=1}^{i_2=k_1} z_{i_2} V_{i_2 l} = 0 \ .
    \end{gather}
     When $\sum_{i_2=1}^{i_2=k_1} z_{i_2} V_{i_2 k}$ and $\sum_{i_2=1}^{i_2=k_1} z_{i_2} V_{i_2 l}$ vanish as the first and second elements of equation (\ref{eq:linearly dependency of vi}). Thus, every $z\in D^{\perp}$ is also in $\ker \left(\rho_{kl} \right)$, hence $\dim \left( \ker\left(\rho_{kl}\right) \right)\geq \dim \left( D^{\perp} \right) \geq k_1-2$, which means the rank of $\rho_{kl}$ is at most $2$.
\end{proof}

\begin{remark}
    One can define the $2$ by $k_1$ matrix:
    \begin{equation}
        V=\left(v_1 \hspace{0.1cm} v_2 ... v_{k_1}\right) \ .
        \label{eq:V matrix of relevant state kl for general alpha and beta}
    \end{equation}
    Then from (\ref{eq:reduced density matrix elemnt i1 i2 of relevant state kl for general alpha and beta by X and V}) it follows that:
    \begin{gather}
        \rho_{kl}=\frac{1}{k_1} V^{\dag} A V 
        \label{eq:reduced density matrix of relevant state kl for general alpha and beta by A and V} \ ,
    \end{gather}
\end{remark}
which proves that $rank(\rho_{kl}) \leq rank(A) \leq 2$.

\subsection{Measuring More than two Qudits}
\label{subsec:Measuring more than two qudits}
In this subsection we generalize Theorem \ref{Theorem:rank(rho) is less than 2} to a case where we combine together more than two clusters. The case of using ancillae that are at a general product state will be derived as a special case.
Let $X_1,X_2,...,X_M$ be $M$ Hilbert spaces that represent the $M$ clusters/ancilla resources, and let $X_m=V_m\otimes W_m$ when $W_m$ represent the chosen qudits, while $V_m$ represent the remaining cluster $m$ (or in the case of ancilla, the empty set). For every $1\leq m \leq M$ denote $k_m=min(\dim (V_m),\dim (W_m))$. By Schmidt decomposition, for every $1\leq m \leq M$ there exists an orthonormal basis $\{\ket{\phi_{m,i}}\}_{i=1}^{i=dim(V_m)}$ for 
$V_m$, $\{\ket{\psi_{m,i}}\}_{i=1}^{i=dim(W_m)}$ for $W_m$ and real non-negative numbers $\{\alpha_{m,i} \}_{i=1}^{i=k_m}$, such that the quantum state $\ket{\Phi_m}$ of $X_m$ is:
\begin{gather}
    \ket{\Phi_m}=\sum_{i=1}^{k_m} \alpha_{m,i} \ket{\phi_{m,i}} \ket{\psi_{m,i}} \ ,
\end{gather}
and the full quantum state is the product state:
\begin{gather}
    \ket{\Phi}=\prod_{m=1}^{M} \ket{\Phi_m} \ .
    \label{eq:Full Quantum State For More Than Two Qudits And Non Entangled Ancilla}
\end{gather}
If the $m$ channel is an ancilla then:
\begin{gather}
    k_m=1,~~~\ket{\phi_{m,1}}=\ket{vac} \ .
    \label{eq:conditions for ancila}
\end{gather}
We now perform an isometric transformation from the space $W_1\oplus W_2 \oplus ... \oplus W_M$ to a bigger Hilbert space, which we represent as a unitary transformation $U^\dag$ on the appropriate creation operators of $W_1\oplus W_2 \oplus ... \oplus W_M\oplus W_{m+1}$, where $W_{m+1}$ includes the vacuum modes:
\begin{gather}
    \begin{bmatrix}
        a_{1,1}^{\dag} \\
        \vdots \\
        a_{1,k_1}^{\dag} \\
        a_{2,1}^{\dag} \\
        \vdots \\
        a_{2,k_2}^{\dag} \\
        \vdots \\
        a_{M,1}^{\dag} \\
        \vdots \\
        a_{M,k_M}^{\dag} \\
        vac_1^{\dag} \\
        \vdots \\
        vac_{k_{M+1}}^{\dag} 
    \end{bmatrix}
    = U
    \begin{bmatrix}
        c_1 \\
        \vdots \\
        c_{K}
    \end{bmatrix} \ .
    \label{eq:fusion matrix for many qudits}
\end{gather}
We denoted $k_{M+1}=dim(W_{M+1})$ and $K=\sum_{m=1}^{M+1} k_m$. After the transformation, the quantum state becomes:
\begin{gather}
    \ket{\Phi}=\sum_{1\leq l_1 \leq l_2 \leq ... \leq l_M \leq K} (\sum_{i_1=1,i_2=1,...,i-M=1}^{i_1=k_1,i_2=k_2,...,i_M=k_M} ((\prod_{m=1}^{M} 
    \nonumber \\
    \alpha_{m,i_m}) a_{i_1,i_2,...,i_M,l_1,l_2,...,l_M} \prod_{m=1}^{M} (\ket{\phi_{m,i_m}}))(\prod_{m=1}^{M} c_{l_m}^{\dag} \ket{vac})) \ ,
    \label{eq:Total Wave Function By clm}
\end{gather}
where the coefficients $a_{i_1,i_2,...,i_M,l_1,l_2,...,l_M}$ are
\begin{gather}
    a_{i_1,i_2,...,i_M,l_1,l_2,...,l_M}=\sum_{\tau \in S_M} \prod_{m=1}^{M} U_{f(m,i_m),l_{\tau (m)}} \ ,
    \label{eq:a coefficient for non entangled M-2 ancilla}
\end{gather}
and $f(m,i_m)=i_m+\sum_{m'=1}^{m-1}k_{m'}$ for $m>1$ and $f(1,i_1)=i_1$. The set $S_M$ is the group of permutations of ${1,2,...,M}$, except for the case when not all the $l_i$ are different from each other, where we identify different permutations $\tau,\sigma$ for which $(l_{\tau_1},...,l_{\tau_M})=(l_{\sigma_1},...,l_{\sigma_M})$ as the same and taking into account only one of them in the sum.
Note that the state $\prod_{m=1}^{M} c_{l_m}^{\dag} \ket{vac}$ is the state of measuring one photon in each $l_m$ channel except for the case, where some of the $l_m$ are equal, for which there is some normalization factor.

From that it follows that if we measure photons in the channels $l_1,l_2,...,l_M$, the resulting quantum state which belongs to the Hilbert space $\prod_{m=1}^{M} V_m$ is:
\begin{gather}
    \ket{\Phi_{l_1,l_2,...,l_M}}=\frac{1}{N_{l_1,l_2,...,l_M}} \sum_{i_1=1,i_2=1,...,i-M=1}^{i_1=k_1,i_2=k_2,...,i_M=k_M} ((\prod_{m=1}^{M} 
    \alpha_{m,i_m})  \cdot
    \nonumber \\
   \cdot a_{i_1,i_2,...,i_M,l_1,l_2,...,l_M} \prod_{m=1}^{M} (\ket{\phi_{m,i_m}})) \ .
   \label{eq:Resulting wave function for M-2 non entangled ancilla}
\end{gather}

Note that if the $m$ channel is an ancilla, then $k_m=1$, $\alpha_{m,1}=1$ and $\ket{\phi_{m,1}}=\ket{vac}$. For the partition of $\prod_{m=1}^{M} V_m$ to $m$ sub-spaces $V_1,V_2,...,V_M$ the regular Von-Neuman entropy is not defined, and instead one can define a vector of length $M$, with the $i$th element of the vector being the Von-Nueman entropy associated with the reduced density matrix that arises from tracing out all the $V_j$ channels except for $V_i$. We will now compute this reduced density matrix for $i=1$, with the results being relevant for every $i$ with the appropriate changes:
\begin{gather}
    \rho_{l_1,l_2,...,l_M}=\sum_{i_1=1,j_1=1}^{i_1=k_1,j_1=k_1} (\rho_{l_1,l_2,...,l_M})_{i_1,j_1} \ket{\phi_{1,i_1}} \bra{\phi_{1,j_1}}
    \nonumber \\
    (\rho_{l_1,l_2,...,l_M})_{i_1,j_1}=\frac{\alpha_{1,i_1}\alpha_{1,j_1}}{N_{l_1,l_2,...,l_M}^2}\sum_{i_2=1,...,i_M=1}^{i_2=k_2,...,i_M=k_M} 
    \nonumber \\
    (\prod_{m=2}^{M}\alpha_{i_m}^2) a_{i_1,i_2,...,i_M,l_1,l_2,...,l_M} a_{j_1,i_2,...,i_M,l_1,l_2,...,l_M}^* \ .
\end{gather}

Define the variables:
\begin{gather}
   X_{n_2,...,n_M,r_2,...,r_M}=\frac{\alpha_{1,i_1}\alpha_{1,j_1}}{N_{l_1,l_2,...,l_M}^2} \sum_{i_2=1,...,i_M=1}^{i_2=k_2,...,i_M=k_M} 
    \nonumber \\(\prod_{m=2}^{M}\alpha_{i_m}^2
    U_{f(m,i_m),n_m} U_{f(m,i_m),r_m}^*) \ , 
    \label{eq:XvariablesForMProductAncilla} 
\end{gather}
then the reduced density matrix can be written as:
\begin{gather}
    (\rho_{l_1,l_2,...,l_M})_{i_1,j_1} = 
    \nonumber \\
    =\sum_{\tau,\sigma\in S_M} X_{l_{\tau(2)},...,l_{\tau(M)},l_{\sigma(2)},...,l_{\sigma(M)}} U_{i_1,l_{\tau(1)}} U_{j_1,l_{\sigma(1)}}^* \ .
    \label{eq:reduced density matrix elemnt i1 j1 of relevant state l1...lK for general alpha and beta}
\end{gather}
The condition for $\rho_{l_1,l_2,...,l_M}$ to be $\frac{1}{k_1}Id_{k_1\times k_1}$ gives the equations:
\begin{gather}
\label{eq:condition for the reduced density matrix elemnt i1 j1 of relevant state l1...lK for general alpha and beta being a scalar matrix}
    \sum_{\tau,\sigma\in S_M} X_{l_{\tau(2)},...,l_{\tau(M)},l_{\sigma(2)},...,l_{\sigma(M)}} U_{i_1,\tau(1)} U_{j_1,\sigma(1)}^*=\frac{\delta_{i_1,j_1}}{k_1} \ .
\end{gather}

\begin{theorem}
\label{Theorem:rank is less or equal to M}
    The reduced density matrix that follows from tracing out all the channels $l_1,l_2,...,l_M$ except for one has rank less or equal to $M$.
\end{theorem}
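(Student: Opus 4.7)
The plan is to lift the kernel argument from Theorem \ref{Theorem:rank(rho) is less than 2}, promoting the bound from two to $M$ by enlarging the relevant spanning set from $2$ to $M$ vectors. The key observation is that in (\ref{eq:reduced density matrix elemnt i1 j1 of relevant state l1...lK for general alpha and beta}) the row and column labels $i_1,j_1$ enter only through $U_{i_1,l_{\tau(1)}}$ and $U^{*}_{j_1,l_{\sigma(1)}}$, and since $\tau(1)$ and $\sigma(1)$ each range over $\{1,\dots,M\}$, the entire $(i_1,j_1)$ dependence factors through the restriction of $U$ to the $M$ clicked columns.

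Concretely, I would first regroup the double permutation sum in (\ref{eq:reduced density matrix elemnt i1 j1 of relevant state l1...lK for general alpha and beta}) by the values $a=\tau(1)$ and $b=\sigma(1)$, absorbing the remaining sums into an $M\times M$ coefficient matrix
\begin{equation}
B_{ab}=\sum_{\substack{\tau,\sigma\in S_M\\\tau(1)=a,\,\sigma(1)=b}} X_{l_{\tau(2)},\dots,l_{\tau(M)},l_{\sigma(2)},\dots,l_{\sigma(M)}},
\end{equation}
which is manifestly independent of $i_1,j_1$. Introducing the $k_1\times M$ matrix $W$ with entries $W_{i,a}=U_{i,l_a}$, the reduced density matrix takes the factored form
\begin{equation}
\rho_{l_1,\dots,l_M}=W\,B\,W^{\dagger},
\end{equation}
directly generalizing the decomposition $\rho_{kl}=V^{\dagger}AV/k_1$ recorded in the remark after Theorem \ref{Theorem:rank(rho) is less than 2}. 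The bound $\mathrm{rank}(\rho_{l_1,\dots,l_M})\leq\mathrm{rank}(W)\leq M$ is then immediate since $W$ has only $M$ columns.

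Equivalently, in the explicit kernel style of that theorem, I would set $u^{(a)}=(U_{1,l_a},\dots,U_{k_1,l_a})^{T}$ and $D=\spn\{u^{(a)}\}_{a=1}^{M}\subseteq\mathbb{C}^{k_1}$, so that $\dim D^{\perp}\geq k_1-M$. For any $z$ with $\sum_{j_1}z_{j_1}\overline{U_{j_1,l_b}}=0$ for all $b=1,\dots,M$, the factored expression yields $\rho_{l_1,\dots,l_M}\,z=0$, so $\dim\ker\rho_{l_1,\dots,l_M}\geq k_1-M$.

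The only subtle point I anticipate is the convention noted before the theorem, in which permutations producing the same ordered tuple $(l_{\tau(1)},\dots,l_{\tau(M)})$ are identified when the click pattern has coincidences among the $l_m$. This identification only rescales $B$ without altering the factorization $W B W^{\dagger}$, so the rank bound is insensitive to whether the click pattern is degenerate; it is purely bookkeeping. Beyond that, the argument is a direct $M$-fold generalization of the previous proof.
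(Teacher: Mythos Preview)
Your proposal is correct and mirrors the paper's argument: the paper defines the vectors $u_{l_m}=(U_{1,l_m},\dots,U_{k_1,l_m})^{*}$, sets $D=\spn(u_{l_1},\dots,u_{l_M})$, and shows that any $z\in D^{\perp}$ lies in $\ker\rho_{l_1,\dots,l_M}$ by exactly the computation you describe, giving $\dim\ker\geq k_1-M$. Your additional $WBW^{\dagger}$ factorization is the natural $M$-variable extension of the $V^{\dagger}AV$ remark the paper records after Theorem~\ref{Theorem:rank(rho) is less than 2}; it is a tidy repackaging of the same idea rather than a different route.
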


\begin{proof}
    For every $1\leq l\leq K$ define the vectors:
    \begin{gather}
        u_l=(U_{1,l},U_{2,l},...,U_{k_1,l})^*  \ .
    \end{gather}
    Denote $D=\spn{\left(u_{l_1},u_{l_2},...,u_{l_M}\right)}$ which is a subspace of $\mathbb{C}^{k_1}$ with dimension $\leq M$. Hence, the dimension of the orthogonal complement of $D$, $D^{\perp}$, is $\geq k_1-M$. Choose some vector $z\in D^{\perp}$ and denote its elements as $z_1, z_2, ... z_{k_1}$ that satisfy:
    \begin{gather}
        \sum_{i_1=1}^{i_1=k_1} z_{i_1}U_{i_1,l_m}^*=\sum_{i_1=1}^{i_1=k_1} z_{i_1}(u_{l_m})_{i_1}=0  \ ,
    \end{gather}
    for every $1\leq m \leq K$.
    Next, compute $\rho_{l_1,l_2,...,l_M} z$ using (\ref{eq:reduced density matrix elemnt i1 j1 of relevant state l1...lK for general alpha and beta}):
    \begin{gather}
        \left(\rho_{l_1,l_2,...,l_M}z\right)_{i_1} = \sum_{j_1=1}^{k_1} (\rho_{l_1,l_2,...,l_M})_{i_1 j_1} z_{j_1} = 
    \nonumber \\
    =\sum_{j_1=1}^{j_1=k_1}z_{j_1}\sum_{\tau,\sigma\in S_M} X_{l_{\tau(2)},...,l_{\tau(M)},l_{\sigma(2)},...,l_{\sigma(M)}} U_{i_1,l_{\tau(1)}} U_{j_1,l_{\sigma(1)}}^* \ ,
\end{gather}
 which can be written as:   
 \begin{gather}
  \left(\rho_{l_1,l_2,...,l_M}z\right)_{i_1}
    =\sum_{\tau,\sigma\in S_M} X_{l_{\tau(2)},...,l_{\tau(M)},l_{\sigma(2)},...,l_{\sigma(M)}} \nonumber\\U_{i_1,l_{\tau(1)}} \left(\sum_{j_1=1}^{j_1=k_1}z_{j_1} U_{j_1,l_{\sigma(1)}}^*\right)
   = 0 \ .
    \end{gather}
   Thus, every $z\in D^{\perp}$ is also in $\ker \left(\rho_{l_1,l_2,...,l_M} \right)$, hence $\dim \left( \ker\left(\rho_{l_1,l_2,...,l_M}\right) \right)\geq \dim \left( D^{\perp} \right) \geq k_1-M$, which means the rank of $\rho_{l_1,l_2,...,l_M}$ is at most $M$.
\end{proof}

By setting all the channels except two , $3,4,...,M$, to be ancilla (by (\ref{eq:conditions for ancila})), we get the case of generalized fusion type II for two d-clusters that are $V_1$ and $V_2$, and regular von-neuman entropy instead of a vector. From this we deduce that in order to fuse to d-clusters by the generalized fusion type-II, we must use at least $d-2$ ancilla qudits. 

\subsection{Measuring More than two Qudits with Entangled Ancilla Qudits}
\label{subsec:Measuring more than two qudits with Entangled Ancilla Qudits}

We now assume that the ancilla qudits, once represented by a wave function in $X_3\otimes...\otimes X_M$, can be entangled, and not necessarily a product state as in equation (\ref{eq:Full Quantum State For More Than Two Qudits And Non Entangled Ancilla}). Instead, assume we have a general Hilbert space $W=W_3\otimes ...\otimes W_M$ on $M$ channels, where $W_i$ populated the appropriate wave function is $\ket{\psi_i}$ --- meaning we set $k_i=1$ for every $i\geq 3$, and we start with a wave function of the ancilla that is the most general form of a wave function in $W$:


{\small \begin{gather}
    \ket{\Phi_{ancilla}}=\sum_{A\subset \{{1,...,M\}},|A|\geq 2,A_1=1,A_2=2} \alpha_A \ket{\psi_{A_3}}...\ket{\psi_{A_{|A|}}} \ , 
    \label{eq:Ancilla state}
\end{gather}}

    \label{eq:Ancilla state}

in which we have terms with any number of photons between 0 and $M-2$ (depending on $|A|-2$). If $A=\{1,2\}$ then the appropriate term in the sum is $\alpha_{\{1,2\}}\ket{vac}$. 
Now, the full quantum state is (as analog to (\ref{eq:Full Quantum State For More Than Two Qudits And Non Entangled Ancilla})):

\begin{gather}
    \ket{\Phi}=\ket{\Phi_1}\ket{\Phi_2}\ket{\Phi_{ancilla}} \ .
\end{gather}

For simplicity, from here on, for every $2\leq M' \leq M$ we denote by $\varpi_{M'}$ the set of all subsets of $\{1,...,M\}$ with size $M$ and $A_1=1,A_2=2$ (where $A_p$ is the p-th element of $A$ where $A$ elements ordered from lower to higher). 
Now, after the transformation (\ref{eq:fusion matrix for many qudits}), we will get the representation:

\begin{gather}
    \ket{\Phi}=\sum_{M'=2}^{M} \sum_{1\leq l_1 \leq ... \leq l_{M'} \leq K} \sum_{A\in \varpi_{M'}} 
    \sum_{i_1=1}^{i_1=k_{A_1}}\sum_{i_2=1}^{i_2=k_{A_2}}
    \nonumber \\
    \alpha_{1,i_1}\alpha_{2,i_2}\alpha_A a_{i_1,i_2,A,l_1,l_2,...,l_{M'}} \ket{\phi_{1,i_1}} \ket{\phi_{2,i_2}}
    (\prod_{m=1}^{M'} c_{l_m}^{\dag} \ket{vac})) \ ,
\end{gather}

where:

\begin{gather}
    a_{i_1,i_2,A,l_1,l_2,...,l_{M'}}=\sum_{\tau\in S_{M'}} \prod_{m=1}^{M'} U_{f(A_m,i_{A_m}),l_{\tau(A_m)}} \ ,
\end{gather}

where $i_{A_m}=1$ for $A_m\geq 3$. Now $M'$ is the number of photons that varies between $2$ and $M$, depending on the size of $A$ --- each specific $A$ term in (\ref{eq:Ancilla state}), contributed to all the options of measuring exactly $|A|$ qudits in the output channels. As in (\ref{eq:a coefficient for non entangled M-2 ancilla}), $S_{M'}$ is the group of permutations of ${1,2,...,M'}$, except for the case when not all the $l_i$ are different from each other, where we identify different permutations $\tau,\sigma$ for which $(l_{\tau_1},...,l_{\tau_M'})=(l_{\sigma_1},...,l_{\sigma_M'})$ as the same and taking into account only one of them in the sum. Now, as in (\ref{eq:Resulting wave function for M-2 non entangled ancilla}), after measuring all the channels $c_1,...,c_K$, if the result is measuring photons in the channels $l_1,...,l_{M'}$ (where $M'\geq 2$), the resulting quantum state which belongs to the Hilbert space $V_1\otimes V_2$ is:

\begin{gather}
    \ket{\Phi_{l_1,l_2,...,l_{M'}}}=\frac{1}{N_{l_1,l_2,...,l_{M'}}} \sum_{A\in \varpi_{M'}}
    \nonumber \\
    \alpha_{1,i_1}\alpha_{2,i_2}\alpha_A a_{i_1,i_2,A,l_1,l_2,...,l_{M'}} \ket{\phi_{1,i_1}} \ket{\phi_{2,i_2}} \ .
\end{gather}




Now, the reduced density matrix achieved by tracing out $V_2$ is:

\begin{gather}
    \rho_{l_1,l_2,...,l_{M'}}=\sum_{i_1=1,j_1=1}^{i_1=k_1,j_1=k_1} (\rho_{l_1,l_2,...,l_{M'}})_{i_1,j_1} \ket{\phi_{1,i_1}} \bra{\phi_{1,j_1}}
    \nonumber \\
    (\rho_{l_1,l_2,...,l_{M'}})_{i_1,j_1}=\frac{\alpha_{1,i_1}\alpha_{1,j_1}}{N_{l_1,l_2,...,l_{M'}}^2}\sum_{A\in \varpi_{M'}}
    \sum_{i_2=1}^{i_2=k_2}
    \nonumber \\
    \alpha_{2,i_2} 
    a_{i_1,i_2,A,l_1,l_2,...,l_{M'}} a_{j_1,i_2,A,l_1,l_2,...,l_{M'}}^* \ .
\end{gather}

As in (\ref{eq:XvariablesForMProductAncilla}), define the variables:

\begin{gather}
    X_{n_2,...,n_{M'},r_2,...,r_{M'}}=\frac{\alpha_{1,i_1}\alpha_{1,j_1}}{N_{l_1,l_2,...,l_{M'}}^2}\sum_{A\in \varpi_{M'}}
    \sum_{i_2=1}^{i_2=k_2} \alpha_{2,i_2}^2 \nonumber \\
    \prod_{m=2}^{m=M'} U_{f(
    A_m,i_{A_m}),n_m} U_{f(A_m,i_{A_m}),r_m}^* \ ,
\end{gather}

so the reduced density matrix can be written as:

\begin{gather}
    (\rho_{l_1,l_2,...,l_{M'}})_{i_1,j_1}=
    \nonumber \\ =\sum_{\tau,\sigma\in S_{M'}} X_{l_{\tau(2)},...,l_{\tau(M')},l_{\sigma(2)},...,l_{\sigma(M')}} U_{i_1,l_{\tau(1)}} U_{j_1,l_{\sigma(1)}}^* \ .
    \label{eq:reduced density matrix elemnt i1 j1 of relevant state l1...lK for general alpha and beta ancilla not product}
\end{gather}

This is the exact form as in (\ref{eq:reduced density matrix elemnt i1 j1 of relevant state l1...lK for general alpha and beta}), with only replacing $M$ by $M'$ (the number of channels containing photons is not $M$ anymore --- this is just the maximum number possible). From here, we can just follow the footsteps of the proof of theorem \ref{Theorem:rank is less or equal to M} while replacing $M$ by $M'$, and get the next theorem.

\begin{theorem}
\label{Theorem:rank is less or equal to M'}
    The reduced density matrix has rank less or equal to the number of detected photons.
\end{theorem}

Finally, one can also use equation \ref{eq:condition for the reduced density matrix elemnt i1 j1 of relevant state l1...lK for general alpha and beta being a scalar matrix} with replacing $M$ by $M'$, to get the condition for $\rho_{l_1,l_2,...,l_{M'}}$ to be $\frac{1}{k_1}Id_{k_1\times k_1}$.

\section{Discussion and Outlook}
\label{sec:discussion and outlook}

We established a structural constraint for generalized type-II fusion of qudit cluster states using passive, number-preserving linear optics with number-resolving detection: for any interferometer and two-click heralding event, the Schmidt rank across the fused cut is upper-bounded by the number of measured systems, implying that a correct $d$-dimensional fusion (rank $d$) is impossible without ancillae and, in particular, requires at least $d\!-\!2$ ancillary qudits. This generalizes familiar qubit no-go results to arbitrary $d$ and to non-Bell projections, and yields a technology-agnostic resource threshold for fusion-based MBQC that reproduces the $d\!=\!2$ ceiling and clarifies why constructive high-$d$ schemes have necessarily invoked additional entanglement or multiple Bell pairs (see, e.g. \cite{rimock2024generalized}).

With the ancilla threshold fixed, a natural next question is: \emph{given $d\!-\!2$ ancillae, what maximal heralded success probability is achievable for generalized type-II fusion?} Existing works provide informative lower bounds \cite{Luo_2019,bharoshigh,Bharos_2025,_st_n_2025}, but a sharp characterization remains open. Two complementary routes appear promising: (i) an \emph{analytical} treatment paralleling the method of \cite{rimock2024generalized} applied to the constraints summarized in Eq.~(49), and (ii) \emph{numerical} optimization over interferometer unitaries, either enforcing maximally entangled outputs as in \cite{schmidt2024generalizedfusionsphotonicquantum} or relaxing to bounded-entanglement targets to trade entanglement for success probability (cf. \cite{rimock2024generalized}). 

While our bounds assume passive, number-preserving optics with number-resolving detection (no feed-forward or nonlinearities), it is important to assess how modest resource relaxations—adaptive feed-forward, weak nonlinearities, or nonclassical ancillae (e.g., squeezed states) modify the accessible reduced-rank structure and potentially reduce ancilla overhead. At the architecture level, our results motivate micro-fusion strategies that allocate a small number of ancillae to maximize rank across graph cuts most relevant for percolation and fault-tolerant encodings, and suggest design curves trading dimension, ancilla count, and success probability. Experimentally, the rank bound is robust to loss and mode mismatch (it cannot be exceeded without additional resources), thereby providing a dependable benchmark for budgeting interferometer depth, detector efficiency, and ancilla supply in high-dimensional, fusion-based photonic quantum computing.

\if{
We have established a structural constraint for generalized type-II fusion of qudit cluster states using passive, number-preserving linear optics with number-resolving detection: for any interferometer and two-click heralding outcome, the Schmidt rank across the fused cut is upper-bounded by the number of measured systems, implying that a correct $d$-dimensional fusion (rank $d$) is impossible without ancillae and, in particular, requires at least $d-2$ ancillary qudits. This generalizes qubit no-go results to arbitrary $d$ and to non-Bell projections, providing a technology-agnostic resource threshold for fusion-based MBQC. Practically, the bound clarifies why prior constructive schemes necessarily invoke additional entanglement or multiple Bell pairs, and it supplies a baseline footprint for high-$d$ fusion constructions. Several directions follow. First, with the ancilla threshold fixed, it is natural to quantify optimal heralding probabilities and entanglement quality achievable by passive interferometers, yielding design curves that trade dimension, ancilla count, and success. Second, modest resource relaxations—adaptive feed-forward, weak nonlinearities, or nonclassical ancillae may alter the accessible reduced-rank structure and reduce overhead; isolating precisely which assumptions our proof uses will sharpen the boundary between impossible and possible fusion. Finally, at the architecture level, our bound motivates micro-fusion strategies that allocate few ancillae to maximize rank across graph cuts relevant for percolation and fault tolerance, while experimental efforts should target low-loss interferometers and efficient number-resolving detectors, for which our constraint remains a robust benchmark.

We defined a generalized type-II fusion of clusters of qudits, and proved that a minimal number of $d-2$ ancilla qudits is required for performing this fusion, as is the case for the standard
type-II fusion for qudits.
The natural question that follows is, given $d-2$ ancilla qudits, what is the maximal success probability of the generalized type-II fusion.
Certain lower bounds have been introduced in \cite{Luo_2019,bharoshigh,Bharos_2025}. A potential analytical avenue to answer the question is to perform an analogous
analysis to that in \cite{rimock2024generalized} for the conditions in (\ref{eq:condition for the reduced density matrix elemnt i1 j1 of relevant state l1...lK for general alpha and beta being a scalar matrix}). Another direction is to perform a numerical analysis, by either focusing on obtaining a maximally-entangled state as in \cite{schmidt2024generalizedfusionsphotonicquantum}, or lowering the required entanglement entropy, while increasing the probability of success as shown in \cite{rimock2024generalized}.

}\fi
\vspace{0.5cm}

\textbf{Acknowledgments}
We would like to thank Khen Cohen for a valuable discussion.  

\textbf{Funding} - This work is supported in part by the Israeli Science Foundation Excellence Center, the US-Israel Binational Science Foundation, and the Israel Ministry of Science.
\bibliographystyle{unsrt.bst}
\bibliography{main.bib}

\end{document}